\def\ps@headings{
\def\@oddhead{\mbox{}\scriptsize\rightmark \hfil \thepage}%
\def\@evenhead{\scriptsize\thepage \hfil \leftmark\mbox{}}%
\def\@oddfoot{}%
\def\@evenfoot{}}
\newcommand{\field}[1]{\mathbb{#1}}
\def\re{\mathbb{R}}
\def\1{\mathbf{1}}
\newtheorem{theorem}{Theorem}
\newtheorem{definition}{Definition}
\newtheorem{lemma}{Lemma}
\newtheorem{problem}[theorem]{Problem}
\newtheorem{remark}[theorem]{Remark}
\title{Worst-Case Scenarios for Greedy, Centrality-Based Network Protection Strategies }
\author{ Michael Zargham$^\dagger$, Victor Preciado$^\dagger$ \thanks{$^\dagger$ Victor Preciado and Michael Zargham are affiliated with the Department of Electrical and Systems Engineering at the University of Pennsylvania}}
\begin{document}
\maketitle

\begin{abstract}
The task of allocating preventative resources to a computer network in order to protect against the spread of viruses is addressed.  Virus spreading dynamics are described by a linearized SIS model and protection is framed by an optimization problem which maximizes the rate at which a virus in the network is contained given finite resources.  One approach to problems of this type involve greedy heuristics which allocate all resources to the nodes with large centrality measures. We address the worst case performance of such greedy algorithms be constructing networks for which these greedy allocations are arbitrarily inefficient.  An example application is presented in which such a worst case network might arise naturally and our results are verified numerically by leveraging recent results which allow the exact optimal solution to be computed via geometric programming.  \end{abstract}

\section{Introduction}
Given a computer network with links representing pathways through
which computer viruses can propagate, how should one distribute protection
resources to minimize the impact of a new piece of malware? The network
structure and the location of infected computers play a key role on
how quickly the malware will spread. Conversely, the location of protection
resources within the network can dramatically improve the efficiency
of protection resources aiming to contain malware spread.

Usual approaches to distribute protection resources in a network of
agents are heuristics based on network centrality measures \cite{New10}.
The main idea behind these approaches is to rank agents according
to different measures of importance based on their location in the
network and greedily distribute protection resources based on the
rank. For example, Cohen et al. \cite{cohen2003efficient} proposed
a simple vaccination strategy called \emph{acquaintance immunization
policy} in which the most connected node of a randomly selected node
is vaccinated. This strategy was proved to be much more efficient
than random vaccine allocation. Hayashi et al. \cite{HMM03} proposed
a simple heuristic called \emph{targeted immunization} consisting
on greedily choosing nodes with the highest degree (number of connections)
in scale-free graphs. Chung et at. \cite{chung2009distributing} studied
a greedy heuristic immunization strategy based on the PageRank vector
of the contact graph. Tong et al. \cite{TPTEFC10} and Giakkoupis
et al. \cite{GGTT05} proposed greedy heuristics based on immunizing
those agents that induce the highest drop in the dominant eigenvalue
of the contact graph. Recently, Prakash et al. \cite{PAITF13} proposed
several greedy heuristics to contain spreading processes in directed
networks when nodes can be partially immunized (instead of completely
removed, as assumed in previous work). These heuristics, as those
in \cite{TPTEFC10,HMM03}, are based on eigenvalue perturbation analysis.

Recently, an optimization-based approach have been developed in \cite{PZ13}-\cite{PDS13}
to solve \emph{exactly}--without relaxations or heuristics--the optimal
immunization problem in polynomial time. In particular, in \cite{PZ13,PZEJP13},
the authors proposed a convex formulation to find the optimal allocation
of protective resources in an \emph{undirected} network using semidefinite
programming (SDP). In \cite{PZEJP13Cones}, the authors solve the
optimal immunization problem in \emph{weighted and directed} networks
of \emph{nonidentical} agents using Geometric Programming (GP). Also,
in \cite{PDS13} a linear-fractional (LF) optimization program was
proposed to compute the optimal investment on disease awareness over
the nodes of a social network to contain a spreading process.

Based on the exact solution to the immunization problem developed
in \cite{PZEJP13Cones}, we propose worst-case scenarios in which
the heuristics previously proposed in the literature perform very
poorly. In our analysis, we derive the exact optimal solution for
certain directed networks and compare with those solutions obtained
using previous heuristics. Our results show how previous heuristics
can perform arbitrarily poorly in certain directed graphs.


\section{The Network Protection Problem}

The susceptible-infected-susceptible model (SIS) is a popular stochastic epidemic model first introduced by Weiss and Dishon, \cite{weiss1971asymptotic}. A discrete time variation of this model for networked populations is explored by Wang et al. in \cite{wang2009cascade}.  A continuous time version called the N-intertwined SIS model was proposed and extensively analyzed by Van Meighem et al. in \cite{van2006performance}.  An extension of the N-intertwined SIS (NiSIS) model including heterogenous agents (HeNiSIS) is presented in Preciado et al., \cite{PZEJP13Cones} and an exact solution to a family of network protection problems is also presented. The network protection problem addressed in this work is allocation of preventative resources given a fixed budget with the goal of maximizing the rate at which the epidemic is expunged. The exact solution presented in \cite{PZEJP13Cones} are leveraged to characterize the worst case behavior of common simple heuristics which greedily allocate resources within a network based on centrality measures.

\subsection{Preliminaries}
Before proceeding with the model we introduce some notation. A weighted directed graph (digraph) is defined as $G = (V, E,W)$ where $V$ is the set of n nodes, $E \subseteq V \times V$ is a set of ordered pairs of nodes indicating directed edges and edge weights $W\in \mathbb{R}_{+}^{n\times n}$ defined as weighted incidence matrix, $W_{ij} = 0$ for all $(i,j) \not\in E$.  The neighbor set of node $i$ is defined $N_i = \{ j: (i,j)\in E\}$.  For an $n \times n$ matrix $M$, the eigenvalues $\lambda_i(M)$ are ordered such that $\re(\lambda_1) \ge \re(\lambda_2) \ge \dots \ge \re (\lambda_n)$ where $\re(y)$ denotes the real part of $y\in \mathbb{C}$. 

\subsection{Virus spreading model}
The HeNiSIS model is a continuous time networked Markov process where each node in the network can be in one of two states: \textit{infected} or \textit{susceptible.}  The state is defined $X_i(t) = \{0,1\}$ for agent $i$ at time $t$ with $X_i(t)=1$ indicates the infected state. Two types of state transitions occur in this model. The probability of these transitions are defined over an infinitesimal time interval $[t, t+\Delta t).$ 
\begin{enumerate}
\item A node in the \textit{susceptible} state may become infected with a probability determined by that nodes infection rate $\beta_i$ state of its neighbors $\{X_j(t), \forall j\in N_i\}$ and the strength of the connections, $\{w_{ji}, \forall j\in N_i\}$:
\begin{align}Pr(X_i(t+\Delta t)& =1 | X_i(t) =0)=\\ &\sum_{j\in N_i} W_{ji} \beta_i X_j(t) \Delta t + o(\Delta t)\nonumber\end{align}
\item A node in the \textit{infected} state may recover from the infection based on the recovery parameter $\delta$:
\begin{equation}
Pr(X_i(t+\Delta t) = 0 | X_i(t) = 1) = \delta \Delta t + o(\Delta t)
\end{equation}
\end{enumerate} 
Analysis under this model is done using the mean field approximation.  The state variable becomes $p_i(t)$, the probability that node $i$ is infected at time $t$.  This quantity evolves according to the $n$ ordinary differential equations:
\begin{equation}
\frac{d\mathbf{p} (t)}{dt} = (BW-\delta I)\mathbf{p}(t) - P(t) BW\mathbf{p}(t) 
\end{equation}
where $\mathbf{p}(t)$ is the stacked vector of probabilities $p_i(t)$, $P(t) =$diag$(\mathbf{p}(t))$ and $B$ is the diagonal matrix with $B_{ii} = \beta_i$.  This system has stable disease free equilibrium $\mathbf{p}^*=0$.  From Proposition 1 in \cite{PZEJP13Cones}, the system stability is globally exponentially stable (with rate $\epsilon$) if $\re ( \lambda_1 (BW-\delta I)) \le -\epsilon$ for some $\epsilon>0$. 
Introducing a budget $C$ and a cost function over the protection resources $f: [\underline\beta, \bar\beta]\rightarrow \re$, the network protection problem maximizes the rate at which the virus is killed off.  Cost is incurred when decreasing the infection rate so it is assumed that $f(\beta)$ is monotonically non-increasing. 
\begin{problem} The \textbf{Network Protection Problem} is given by \label{prob}
\begin{eqnarray*}
\max_{\beta, \epsilon} && \epsilon\\
s.t. && \field{R}[\lambda_1(BW-\delta I)] \le -\epsilon\\
&& \sum_{i=1}^n f(\beta_i) \le C\\
&& \underline{\beta} \le \beta_i \le \bar \beta \qquad \forall i\in V.
\end{eqnarray*}
\end{problem}
This problem can be solved exactly via convex optimization when the function $f(\cdot)$ is a log-convex function, \cite{PZEJP13Cones}.  Knowledge of the optimal solution of problem \ref{prob} is a new development. This work proceeds to evaluate common simple heuristics in light of this knowledge.

%
%
%
%
%

\subsection{Greedy, Centrality Based Strategies}

Before addressing types of greedy heuristics, we define some additional notation.

\begin{definition}
In the vaccination problem, for any vector $u\in \field{R}^n$ over the nodes and subset of the node set $S\subseteq V$ , define the vector \[ u(S) \in \field{R}^{|S|}\]
to be the values of $u$ on the nodes in $S$.
\end{definition}

\begin{definition}
Extract the \textbf{effective objective} in Problem \ref{prob} which is induced by the epigraph form. Define 
\begin{equation}
\epsilon({\beta}) = -\field{R}[\lambda_1(BW-\delta I)] \label{ee}
\end{equation}
where $B= \hbox{diag} (\beta)$ for any feasible resource allocation $\beta$.
\end{definition}
Monotonicity and continuity of the function $\epsilon({\beta})$ guarantee that fixing any feasible $\beta$ and maximizing over $\epsilon$ always causes the constraint $ \field{R}[\lambda_1(BW-\delta I)] \le -\epsilon$ to become tight. At the optimal point $(\beta^*,\epsilon^*)$ of Problem \ref{prob} satisfies  \[ \epsilon^*=- \field{R}[\lambda_1(\hbox{diag}(\beta^*)W-\delta I)]. \]  Thus, when solving the resource allocation $\beta$, $\epsilon(\beta)$ is treated as the effective objective in Problem \ref{prob}.

\begin{definition}
Define the \textbf{efficiency} of a feasible resource allocation $\beta$ as
\begin{equation}
Q(\beta) = \frac{\epsilon(\beta)- \epsilon(\bar\beta)}{\epsilon(\beta^*)-\epsilon(\bar\beta)} \in [0,1]
\end{equation}
where $\beta^*$ is a resource allocation achieving the maximum in Problem \ref{prob}.
\end{definition}
The effective objective $\epsilon(\beta)$ and the costs functions $f(\beta_i)$ are monotonically non-increasing in the resource allocations $\beta_i$ at each node, therefore $\bar\beta$ trivially achieves the minimum over the set of feasible resource allocations $\beta$.

\begin{definition}
Let $v$ be a centrality vector.  Given a budget sufficient to completely vaccinate $k$ nodes: $C= kf(\underline\beta)$, the greedy vaccination strategy $\hat \beta_v$  is to completely vaccinate in $k$ nodes with the highest values in $v$. Define the vaccination fraction: $r=k/N$ where $N$ is the the total number of nodes.
\end{definition} 

Common centrality measures used for heuristics are degree and eigenvector centrality, \cite{HMM03}. Page rank centrality is used as in place of eigenvector centrality in the case of general digraphs, \cite{chung2009distributing}.  While Page rank depends on a parameter $\alpha$, we drop the $\alpha$ from our notation because our results hold for the whole family of Page rank vectors generated by non-trivial choices of $\alpha\in (0,1)$. \\

\section{Analytical Results}

\begin{figure}[b]
\includegraphics[width=.7\columnwidth]{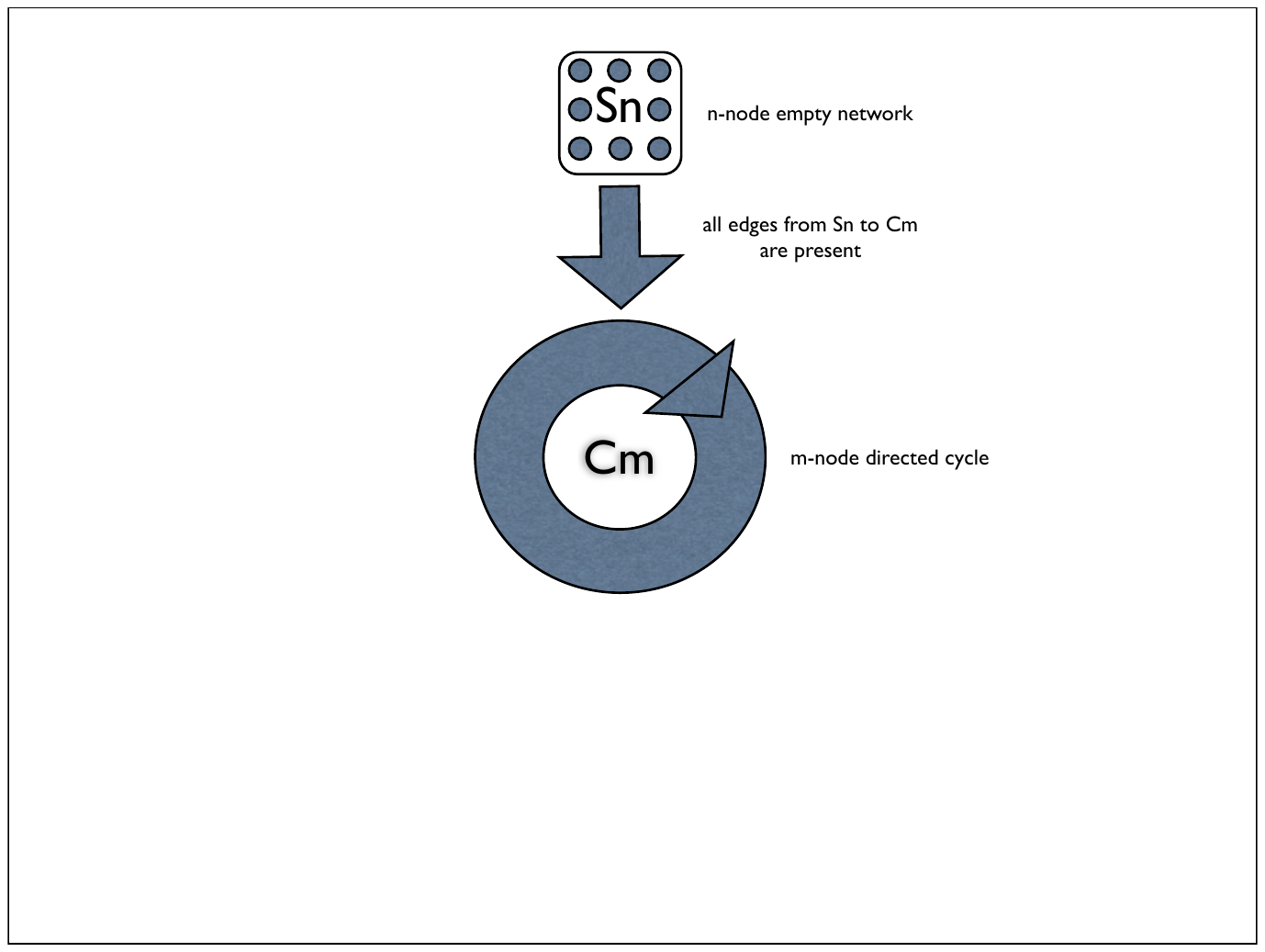}
\centering
\caption{\label{wheel}  We construct the network $G$ to prove theorem \ref{counterex}.}
\end{figure}
\begin{theorem}\label{counterex}
Given a budget $C$, there exists a network $G$ satisfying 
\[Q(\hat\beta_{DEG})=Q(\hat\beta_{PR}) = 0\]
where  $r\in(0,1)$ is the fraction of nodes that can be vaccinated.
\end{theorem}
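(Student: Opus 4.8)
The plan is to exhibit a single directed graph $G$ (Figure \ref{wheel}) that decouples the two quantities the greedy rule conflates: the centrality scores, and the sensitivity of the dominant eigenvalue. First I would simplify the objective. Because $\delta I$ is a scalar shift, $\lambda_1(BW-\delta I)=\lambda_1(BW)-\delta$, and since $BW$ is entrywise nonnegative its dominant eigenvalue is real and nonnegative; hence $\epsilon(\beta)=\delta-\lambda_1(BW)$, and the numerator of $Q$ vanishes for a strategy $\hat\beta$ exactly when $\lambda_1(B_{\hat\beta}W)=\lambda_1(\bar B W)$, i.e.\ when the greedy allocation leaves the spreading rate untouched. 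The denominator is handled separately by exhibiting one feasible allocation that strictly improves $\lambda_1$.

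The construction has two decoupled parts. A \emph{popularity gadget} consisting of one or several hubs, each fed by $p$ pendant leaves (edges oriented leaf$\to$hub), supplies the nodes of largest degree and PageRank; by taking $p$ large and replicating the gadget $k$ times I guarantee at least $k$ such nodes, all of which lie off every directed cycle. A separate \emph{critical cycle}---a short cycle carrying an edge weight $w>1$---is attached so that its associated diagonal block has spectral radius $\bar\beta w$, strictly larger than that of any other strongly connected component; its vertices are given the minimum degree needed to close the cycle and receive almost no PageRank.

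The core argument is a strongly-connected-component reduction. Ordering the vertices by condensation makes $BW$ block upper triangular, so $\lambda_1(BW)$ equals the largest spectral radius among the diagonal blocks, which by design is the critical-cycle block with value $\bar\beta w$; every popularity-gadget vertex sits in a trivial (acyclic) block contributing eigenvalue $0$. Consequently, lowering $\beta_i$ from $\bar\beta$ to $\underline\beta$ on any set of gadget vertices perturbs only off-diagonal and trivial blocks and leaves the critical block---hence $\lambda_1(BW)$---exactly equal to $\bar\beta w$. Since the degree ranking and (I claim) the PageRank ranking both place the $k$ gadget vertices strictly above every critical-cycle vertex, the budget $C=kf(\underline\beta)$ is spent entirely inside the gadget, giving $\epsilon(\hat\beta_{DEG})=\epsilon(\hat\beta_{PR})=\epsilon(\bar\beta)$ and a zero numerator. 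For the denominator, the feasible allocation vaccinating a single critical-cycle vertex drops that block's spectral radius to $\sqrt{\underline\beta\,\bar\beta}\,w<\bar\beta w$, so that $\epsilon(\beta^*)\ge \delta-\sqrt{\underline\beta\,\bar\beta}\,w>\epsilon(\bar\beta)$; thus $Q(\hat\beta_{DEG})=Q(\hat\beta_{PR})=0$, with $r=k/|V|\in(0,1)$ by choosing $|V|$ large.

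The step I expect to fight hardest is the PageRank domination, and specifically its uniformity over all $\alpha\in(0,1)$. Degree is immediate from the in-degree $p$ of the hubs, but as $\alpha\to1$ the PageRank vector converges to the stationary distribution of the walk, which tends to concentrate on recurrent structure---precisely the critical cycle---and could overturn the ranking. I would neutralize this by making the critical cycle non-absorbing (adding one low-weight leak edge out of the cycle, which does not disturb the block-triangular spectral computation above) and by choosing $p$ large enough that the mass accumulated at the hubs dominates the cycle's stationary mass for every $\alpha$; verifying this bound uniformly in $\alpha$, including the limit, is the delicate part of the proof.
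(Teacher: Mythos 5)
Your overall strategy is the same as the paper's: build a reducible digraph in which the only nontrivial strongly connected component is a directed cycle, invoke the block-triangular (condensation) structure to show that $\lambda_1(BW)$ depends only on the $\beta$-values on that cycle, and arrange the centrality scores so that both greedy rules spend the entire budget on nodes outside the cycle, forcing the numerator of $Q$ to vanish while a feasible allocation on the cycle keeps the denominator positive. The paper's construction differs in its details --- a single complete bipartite attachment of an independent set $S_n$ into an unweighted $m$-cycle $C_m$, rather than your separate leaf--hub gadgets plus a weighted critical cycle --- but the spectral half of your argument is sound and essentially identical to the paper's reduction $\epsilon(\beta)=\Re\,\lambda_1(\mathrm{diag}\,\beta(C_m)\,U)-\delta$. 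Two small cautions on that half: the paper's degree heuristic is \emph{out}-degree (with a footnote handling total degree via $m>n+2$), whereas your leaf$\to$hub gadget makes the hubs large in \emph{in}-degree; you must fix one convention for both the degree and the direction of the PageRank walk, and reverse all edges for the other convention, exactly as the paper's remark does.

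The genuine gap is the one you yourself flag: the claim that the gadget nodes dominate every cycle node in PageRank \emph{uniformly over all} $\alpha\in(0,1)$ is asserted but not proved, and your proposed fix (a leak edge plus ``$p$ large enough'') requires an asymptotic estimate in the walk-dominated limit that you have not carried out and that is sensitive to where the leaked mass goes. The paper avoids this entirely by building symmetry into the graph: every node of $S_n$ is equivalent under an automorphism, as is every node of $C_m$, so the PageRank eigenvector necessarily has the form $(\chi\mathbf{1},\mathbf{1})$ and the $N$-dimensional fixed-point equation collapses to two scalar equations. Subtracting them gives the closed form $\chi=\frac{1}{\alpha}\left(1+\frac{(1-\alpha)(m-2)}{m+2}\right)$, which is $>1$ for every $\alpha\in(0,1)$ the moment $m>2$ --- no limits, no large-$p$ bookkeeping, no leak edges. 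If you redesign your gadget so that all high-centrality nodes lie in one orbit and all cycle nodes in another (your current three-class structure of leaves, hubs, and cycle vertices breaks this), the delicate step becomes a two-line computation; otherwise you owe a genuine uniform-in-$\alpha$ estimate before the theorem is proved. Your explicit lower bound on the denominator (vaccinating one cycle node drops the cycle block's spectral radius) is correct and is a detail the paper treats only implicitly.
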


\begin{proof}
Construct the network $G=\{V,E\}$ as follows, the node set is partitioned $V = C_m \cup S_n$ where $|S_n|=n$, $|C_m|=m$ and $N=m+n$. Choose any $n$ and $m$ satisfying
\begin{equation}
2<m\le n\left(\frac{1}{r}-1\right) \hbox{  and  }
C \le n f(\underline\beta).
\end{equation}

As shown in Fig. \ref{wheel}, define the subgraph $C_m$ as an $m$ node directed cycle, $S_n$ as an $n$ node empty network and the there are edges from all nodes $i\in S_n$ to all nodes $j\in C_m$.  Formally, the edge set is given by
\begin{equation}
(i,j) \in E \hbox{ if any of }\left\{ \begin{array}{l} i\in S_n, j\in C_m\\ i, j=i+1\in C_m\\ i=m+n, j=n+1\in C_m \end{array}\right.
\end{equation}
and in all other cases $(i,j)\not \in E.$ Let us consider the object function in the case of $G$, $\epsilon(\beta) =$
\begin{equation}
 -\field{R}\lam_1\left(\delta I-\left[\begin{array}{cc} \hbox{diag}\beta(S_n) & 0 \\ 0 &\hbox{diag}\beta(C_m) \end{array} \right] \left[ \begin{array}{cc} 0 & \mathbf{11'} \\ 0 & U\end{array}\right]\right)
\end{equation}
where $U$ is the adjacency matrix of the directed cycle. Block multiplication yields $\epsilon(\beta) =$
\begin{equation}
- \field{R}\lam_1\left(\delta I- \left[ \begin{array}{cc} 0 & \hbox{diag}\beta(C_m)(\mathbf{11'})\\ 0 & \hbox{diag}\beta(C_m)U\end{array}\right]\right) 
\end{equation}
which due to the block diagonal structure simplifies to
\begin{equation}
\epsilon(\beta) = \field{R}\lam_1\left(\hbox{diag}\beta(C_m)U\right)-\delta
\end{equation}
which tells us that the optimal budget allocation is over the nodes in $C_m$,
\begin{equation}\beta^* = [\bar\beta(S_n), \beta^*(C_m)].\label{dep}\end{equation}
Consider the out-degree\footnote{Total degree centrality may be substituted for out degree by imposing the condition $m>n+2$. Such an $(n,m)$ pair exists and is chosen for demonstrations in section \ref{demos}.} vector of G:
\begin{equation}
DEG(i) = \left\{ \begin{array}{ll} 1, & \forall i\in C_m\\ m, & \forall i\in S_n \end{array}\right.
\end{equation}
By our choice of $m$ and $n$, $rN<n$ and the greedy heuristic solution $\hat \beta_{DEG}$, satisfies 
\begin{equation}
i \in C_m \implies \hat \beta_{DEG} (i) = \bar\beta_i.
\end{equation}
From \eqref{dep} infection rates for nodes in $S_n$ have no impact on $\epsilon$, thus $\epsilon(\hat \beta_{DEG}) = \epsilon(\bar \beta)< \epsilon(\beta^*).$
For the case of $\hat\beta_{PR}$, it is necessary to compute the Page rank vector for $G$.  The Page rank vector is the dominant eigenvector of the stochastic matrix
\begin{equation}
\frac{\alpha}{N} \mathbf{11'} + (1-\alpha) \bar W \label{PRdef}
\end{equation}
where $\alpha\in (0,1)$ is the teleportation parameter and $\bar W$ is a random walk matrix on $G$ given by
 \begin{equation}
 \bar W_{ij}= \left\{\begin{array}{cc} \frac{1}{1+DEG_i}, & (i,i) \hbox{ and } (i,j) \in E \\ 0, & \hbox{otherwise}\end{array}\right.. \label{RWdef}
 \end{equation}
  Due to the structure (symmetry within $S_n$ and $C_m$) of $G$ and the fact that the dominant eigenvalue of a stochastic matrix is 1, we can write the block matrix eigenvector equation
\begin{equation}
P_\alpha\left[ \begin{array} {c} \chi \mathbf{1}\\ \mathbf{1}\end{array} \right]
=\left[ \begin{array}{c} \chi \mathbf{1}\\ \mathbf{1} \end{array} \right] \label{FP}
\end{equation}  
where
\begin{equation}
P_\alpha =
\left[ \begin{array}{cc} \frac{\alpha}{N} \mathbf{11'}+{(1-\alpha)} I&\frac{\alpha}{N} \mathbf{11'} +\frac{(1-\alpha)}{m+2}\mathbf{11'}   \\ \frac{\alpha}{N} \mathbf{11'}& \frac{\alpha}{N} \mathbf{11'} + \frac{(1-\alpha)}{m+2}(U+I)\\ \end{array} \right] \label{EV}
\end{equation}  
where $U$ is the adjacency matrix for $C_m$, which satisfies $U'\mathbf{1}= \mathbf{1}$ . 
Simplifying equation defined by \eqref{EV} and \eqref{FP} by multiplying through by each instance of $\mathbf{1}$, we have
\begin{eqnarray}
\frac{\alpha}{N}(\chi n+m) +(1-\alpha)\chi +(1-\alpha)\frac{m}{m+2}&=& \chi \label{eq1}\\
\frac{\alpha}{N}(\chi n+m) +(1-\alpha)\frac{2}{m+2}&=& 1\label{eq2}.
\end{eqnarray}
Subtracting \eqref{eq2} from \eqref{eq1} and simplifying, we have
\begin{equation}
\chi = \frac{1}{\alpha}\left(1+\frac{(1-\alpha)(m-2)}{m+2}\right).
\end{equation}
We have selected $m>2$, so it is guaranteed that $\chi>1$ for all $\alpha\in(0,1)$.  From equation \eqref{FP} we have the probability vector
\begin{equation}
PR(i) = \left\{ \begin{array}{ll} \frac{1}{n\chi +m} & \forall i\in C_m\\  \frac{\chi}{n\chi+m} & \forall i\in S_n \end{array}\right. \label{scaling}
\end{equation}
with $\chi>1$, guaranteeing that nodes $i\in S_n$ always have greater Page rank centrality than nodes in $C_m$. Using the same argument as in the degree centrality case, we have $\epsilon(\hat\beta_{PR}) = \epsilon(\bar \beta)< \epsilon(\beta^*)$. \end{proof}

\begin{remark}
The proof of  Theorem \ref{counterex} makes use of a constructive example for the centrality measures which identify nodes which are the most likely to become infected: (a) out degree and (b) Page rank with a random walk defined as moving up the edges.  If one uses centrality measures which identify nodes which would be the most potent seeds such as (c) in degree or (d) Page rank computed using a random walk that flows down the edges, one can construct an alternative $G$ by simply reversing the direction of the edges from $S_n$ to $C_m$. Using this alternative network, one can reproduce Theorem \ref{counterex} for (c) and (d).
\end{remark}

Theorem \ref{counterex} tells us that for a general digraph, the greedy allocation strategy can be arbitrarily bad.  However, common network resource allocations take the basic assumption that the graph is strongly connected. Since the proof constructs a graph which is weakly, but not strongly connected we develop a related theorem for strongly connected digraphs.

\begin{lemma}\label{lemma}
There is a 1 parameter family of strongly connected digraph $G'_\gamma=\{V, E \cup E_\gamma\}$ for $\gamma>0$ with the node set $V=S_n \cup C_m$, for which the greedy centrality based strategies $\hat \beta_{DEG}$ and $\hat \beta_{PR}$ allocate resources only in $S_n$.
\end{lemma}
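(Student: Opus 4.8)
The plan is to restore strong connectivity with a symmetric perturbation and then show that the added edges cannot promote any node of $C_m$ past the nodes of $S_n$ in either centrality ranking. Concretely, I would take $E_\gamma$ to be the complete set of directed edges from $C_m$ to $S_n$, each carrying weight $\gamma>0$; that is, $(j,i)\in E_\gamma$ for every $j\in C_m$ and $i\in S_n$. This choice does two things at once: it makes $G'_\gamma$ strongly connected (every node of $S_n$ reaches $C_m$ by the original edges, every node of $C_m$ reaches $S_n$ by the new edges, and $C_m$ is internally strongly connected through the cycle $U$), and it preserves the full symmetry of the construction within $S_n$ and within $C_m$, which is exactly what the block reduction in the proof of Theorem~\ref{counterex} exploits. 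Strong connectivity for every $\gamma>0$ is immediate from these three reachability facts, so the only content is the two centrality comparisons.

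For the degree strategy I would work with the weighted out-degree. The nodes of $S_n$ retain out-degree $m$ (their only out-edges are the unit-weight edges into $C_m$), while each node of $C_m$ now has out-degree $1+n\gamma$ (the unit cycle edge together with $n$ edges of weight $\gamma$ into $S_n$). Since $m>2$, the strict inequality $m>1+n\gamma$ holds for all $\gamma<(m-1)/n$, so on this range the $k$ highest-degree nodes all lie in $S_n$; because the budget satisfies $C\le nf(\underline\beta)$ and $rN<n$, the greedy allocation $\hat\beta_{DEG}$ places every unit of resource inside $S_n$. If one insists instead on total or unweighted degree, the same conclusion follows by carrying the footnote condition $m>n+2$ through the comparison.

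For the PageRank strategy I would repeat the block eigenvector computation of Theorem~\ref{counterex}, now for $G'_\gamma$. The symmetry guarantees that the stationary vector again has the two-value form $[\chi_\gamma\mathbf 1;\,\mathbf 1]$, so projecting the fixed-point equation onto this block pattern (multiplying through by $\mathbf 1$ as in the proof of Theorem~\ref{counterex}) collapses it to a $2\times2$ system in the single unknown $\chi_\gamma$, differing from \eqref{eq1}--\eqref{eq2} only by the extra $C_m\!\to\!S_n$ transition terms that $\gamma$ introduces. I would solve that system and verify $\chi_\gamma>1$, i.e.\ that $S_n$ still carries the larger PageRank. A clean way to sidestep the algebra is a perturbation argument: under the weighted random walk the transition matrix depends continuously on $\gamma$ and reduces at $\gamma=0$ to the matrix of Theorem~\ref{counterex}, where $\chi_0=\chi>1$ strictly; since the teleportation parameter $\alpha\in(0,1)$ keeps $P_\alpha^\gamma$ irreducible, its dominant eigenvector varies continuously in $\gamma$ and the strict inequality $\chi_\gamma>1$ survives for all sufficiently small $\gamma$. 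Either way $\hat\beta_{PR}$, like $\hat\beta_{DEG}$, allocates only within $S_n$.

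I expect the PageRank comparison to be the main obstacle, for two reasons. First, adding the $C_m\!\to\!S_n$ edges changes the out-degree normalisation of the $C_m$ rows of the walk, so the reduced equations are not a cosmetic edit of \eqref{eq1}--\eqref{eq2} and must be rederived with care before one can read off the sign of $\chi_\gamma-1$. Second, the continuity shortcut is only valid for the weighted walk, since an unweighted walk would change discontinuously as the edge set jumps at $\gamma=0$; committing to the weighted convention, and hence to ``sufficiently small $\gamma$'', is what makes the family well defined. The degree comparison and the strong-connectivity check are routine by contrast.
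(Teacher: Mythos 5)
Your construction is viable but genuinely different from the paper's, and the difference matters for the strength of the conclusion. The paper takes $E_\gamma$ to be the \emph{all-to-all} edge set with weight $\gamma$, so every node's weighted out-degree increases by exactly $\gamma N$: the ordering $m+\gamma N>1+\gamma N$ is preserved for \emph{every} $\gamma>0$, and the explicit PageRank computation gives
\[
\chi=\frac{1+N\gamma}{\alpha+N\gamma}\left(1+\frac{(1-\alpha)(m-2)}{m+2+N\gamma}\right)>1
\]
for all $\gamma>0$. Your choice of $E_\gamma$ (only the $C_m\to S_n$ edges) also restores strong connectivity, but it boosts only the out-degrees of $C_m$, so the degree comparison $m>1+n\gamma$ fails once $\gamma\ge(m-1)/n$, and your PageRank argument (whether by redone algebra or by perturbation from $\gamma=0$) likewise only certifies $\chi_\gamma>1$ for sufficiently small $\gamma$. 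The lemma asserts the property for the family indexed by all $\gamma>0$; your family only has it on a bounded interval $(0,\gamma_0)$. That restriction is harmless for the downstream use in Theorem \ref{digraph}, which only needs $\gamma\to0^{+}$, but it is a weaker statement than the one claimed, and the proof of Theorem \ref{digraph} as written uses the all-to-all perturbation $W+\gamma\mathbf{11'}$, so it would have to be adjusted to match your edge set.

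The other point to tighten is that your PageRank step is a plan rather than a proof: you offer either to redo the $2\times2$ block reduction (but do not carry out the algebra or exhibit the sign of $\chi_\gamma-1$) or to invoke continuity of the Perron vector. The continuity route is sound --- the Google matrix is positive for $\alpha\in(0,1)$, so the dominant eigenvalue is simple and the eigenvector depends continuously on $\gamma$, with $\chi_0>1$ strict --- and your observation that this requires the weight-normalized walk (otherwise the transition matrix jumps at $\gamma=0^{+}$) is exactly right; the paper's \eqref{gamP} indeed uses weighted normalization. But you should commit to one route and finish it, and either state explicitly that your family is $\{G'_\gamma:0<\gamma<\gamma_0\}$ or switch to the all-to-all perturbation to obtain the clean ``all $\gamma>0$'' statement with no threshold.
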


\begin{proof}
Consider the $N$ node network $G=\{V,E\}$ presented in the proof of theorem \ref{counterex}. We construct a new network $G'_\gamma = \{V, E \cup E_\gamma\}$ where $E_\gamma$ is all to all and these edges are assigned positive weight $\gamma>0$.

With the addition of the all to all edge set, it is trivial to observe that the out degree centrality vector becomes
\begin{equation}
DEG(i) = \left\{ \begin{array}{ll} 1+\gamma N & \forall i\in C_m\\ m+ \gamma N & \forall i\in S_n \end{array}\right.
\end{equation}
By our choice of $m$ and $n$, $rN<n$ and the greedy heuristic solution $\hat \beta_{DEG}$, satisfies 
\begin{equation}
i \in C_m \implies \hat \beta_{DEG} (i) = \bar\beta_i.
\end{equation}
therefore the resource allocation is entirely on $S_n$.\\

Now we consider the case of Page rank centrality according to the definition in equation \eqref{PRdef} for the random walk $\bar W$ on $G_\gamma'$ computed according to \eqref{RWdef}.  The fixed point equation is given by \eqref{FP} where
\begin{equation}
P_\alpha=\frac{\alpha}{N}\mathbf{11'}+(1-\alpha)\left[ \begin{array}{cc} \frac{I+\gamma \mathbf{11'}}{1+N\gamma}&\frac{(1+\gamma)\mathbf{11'}}{m+2+N\gamma}   \\ \frac{\gamma \mathbf{11'}}{1+N\gamma}& \frac{(I+U+\gamma\mathbf{11'})}{m+2+N\gamma}\\ \end{array} \right] \label{gamP}
\end{equation}  
Simplifying equation \eqref{FP} with $P_\alpha$ defined by \eqref{gamP} by multiplying through by each instance of the vector $\mathbf{1}$ and consolidating terms, we have

\begin{equation}
\begin{array}{l}\frac{\alpha n \chi + (1-\alpha)m }{N}  +\frac{(1-\alpha)(1+\gamma n)\chi}{1+N\gamma}+\frac{(1-\alpha)(1+\gamma m)}{m+2+N\gamma}\end{array}= \chi \label{Geq1}
\end{equation}
\begin{equation}
\begin{array}{l}\frac{\alpha n \chi + (1-\alpha)m }{N}  +\frac{(1-\alpha)(2+m\gamma)}{m+2+N\gamma}\end{array}= 1\label{Geq2}.
\end{equation}
Subtracting \eqref{Geq2} from \eqref{Geq1} many terms cancel and allowing us to simplify and solve for $\chi$:
\begin{equation}
\chi = \frac{1+N\gamma}{\alpha+N\gamma}\left(1+\frac{(1-\alpha)(m-2)}{m+2+N\gamma}\right).
\end{equation}
The leading term in the product $\frac{1+N\gamma}{\alpha+N\gamma}>1$ and we have selected $m>2$, so $\chi>1$ is guaranteed for all $\gamma>0$.   From \eqref{scaling}, the Page rank of any node in $S_n$ is a factor of $\chi>1$ times that of any node in $C_m$, guaranteeing that nodes $i\in S_n$ always have greater Page rank centrality than nodes in $C_m$. Following the same argument as for the $DEG$ based allocation, the $PR$ greedy resource allocation is entirely on the set $S_n$.
\end{proof}

Lemma \ref{lemma} characterizes the greedy resource allocation strategies on the family of digraphs $G_\gamma'$. Specifically, the infection rate profiles achieved $\hat\beta_{DEG}$ and $\hat\beta_{PR}$ do not depend on $\gamma$.  For any $\gamma$ nodes in $S_n$ are fully immunized until all resources are expended. 

\begin{theorem}\label{digraph}
There exists a strongly connected digraph for which the centrality based greedy solutions $\hat\beta_v$ are arbitrarily inefficient solutions to Problem 1, in the sense that
\[Q(\hat\beta_v)=\frac{\epsilon(\hat\beta_v)- \epsilon(\bar\beta)}{\epsilon(\beta^*)-\epsilon(\bar\beta)} \le \Gamma  \]
for any $\Gamma>0$, when the centrality vector $v=DEG,$ or $PR$.
\end{theorem}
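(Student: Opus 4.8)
The plan is to exploit the strongly connected family $G'_\gamma$ furnished by Lemma \ref{lemma} together with a continuity-in-$\gamma$ argument that degenerates $G'_\gamma$ back to the weakly connected graph $G$ of Theorem \ref{counterex} as $\gamma \to 0^+$. Write $\epsilon_\gamma(\beta) = -\field{R}[\lambda_1(\hbox{diag}(\beta) W_\gamma - \delta I)]$ for the effective objective on $G'_\gamma$, where $W_\gamma$ is its weighted incidence matrix, and let $\epsilon_0 = \epsilon$ denote the effective objective on $G = G'_0$. By Lemma \ref{lemma}, for $v = DEG$ or $PR$ the greedy allocation $\hat\beta_v$ places all resources on $S_n$, and, as observed in the remark following the lemma, the same allocation is used for every $\gamma > 0$. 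Since $E_\gamma$ assigns weight $\gamma$ to the added edges, the entries of $W_\gamma$ are affine in $\gamma$ and $W_\gamma \to W$ as $\gamma \to 0^+$; because eigenvalues depend continuously on the entries of a matrix, for every fixed feasible $\beta$ we have $\epsilon_\gamma(\beta) \to \epsilon_0(\beta)$.

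First I would bound the numerator of $Q(\hat\beta_v)$. Applying continuity to $\hat\beta_v$ and to $\bar\beta$ gives
\[
\epsilon_\gamma(\hat\beta_v) - \epsilon_\gamma(\bar\beta) \longrightarrow \epsilon_0(\hat\beta_v) - \epsilon_0(\bar\beta).
\]
On $G$ the allocation $\hat\beta_v$ is supported entirely on $S_n$, and by the block reduction in the proof of Theorem \ref{counterex} (equation \eqref{dep}) infection rates on $S_n$ have no impact on $\epsilon_0$; hence $\epsilon_0(\hat\beta_v) = \epsilon_0(\bar\beta)$ and the numerator tends to $0$.

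Next I would handle the denominator, deliberately avoiding any tracking of the $\gamma$-dependent optimizer $\beta^*_\gamma$. Let $\beta^*_0$ be an optimal allocation on $G$ (resources concentrated on $C_m$). Since $\beta^*_0$ is feasible for every $\gamma$ and $\beta^*_\gamma$ is optimal on $G'_\gamma$, optimality yields $\epsilon_\gamma(\beta^*_\gamma) \ge \epsilon_\gamma(\beta^*_0)$. Taking $\gamma \to 0^+$ and using continuity,
\[
\liminf_{\gamma \to 0^+}\big(\epsilon_\gamma(\beta^*_\gamma) - \epsilon_\gamma(\bar\beta)\big) \ge \epsilon_0(\beta^*_0) - \epsilon_0(\bar\beta) > 0,
\]
where strict positivity is exactly the gap $\epsilon(\bar\beta) < \epsilon(\beta^*)$ established in Theorem \ref{counterex}. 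Combining the two estimates, $Q(\hat\beta_v) \to 0$ as $\gamma \to 0^+$, so for any prescribed $\Gamma > 0$ it suffices to fix $\gamma$ small enough that $Q(\hat\beta_v) \le \Gamma$; the graph $G'_\gamma$ is strongly connected by construction, which completes the argument.

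The main obstacle I anticipate is the denominator: one must ensure it stays bounded away from $0$ even though the true optimizer $\beta^*_\gamma$ moves with $\gamma$. The device of lower-bounding $\epsilon_\gamma(\beta^*_\gamma)$ by the fixed feasible point $\beta^*_0$ sidesteps any need to control $\beta^*_\gamma$, reducing everything to the already-established strict gap on $G$. A minor point to confirm is that eigenvalue continuity is being applied to $\field{R}[\lambda_1(\cdot)]$, which is legitimate since all eigenvalues, and hence their real parts, vary continuously with the matrix entries.
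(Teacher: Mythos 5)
Your proposal follows the same route as the paper: both work with the family $G'_\gamma$ from Lemma \ref{lemma}, invoke continuity of eigenvalues in the matrix entries to let $\gamma \to 0^+$, and collapse to the $Q=0$ conclusion of Theorem \ref{counterex} at $\gamma=0$. The one place you genuinely improve on the paper is the denominator: the paper simply asserts that $Q(\beta;\gamma)$ is continuous in $\gamma$ as a composition of continuous functions, which quietly sweeps under the rug that $\epsilon(\beta^*;\gamma)$ is an optimal \emph{value} function whose maximizer $\beta^*$ moves with $\gamma$ --- continuity of that term is not a consequence of eigenvalue continuity alone and would need something like a maximum-theorem argument. Your device of lower-bounding $\epsilon_\gamma(\beta^*_\gamma) \ge \epsilon_\gamma(\beta^*_0)$ with the fixed feasible point $\beta^*_0$ and passing to a $\liminf$ sidesteps this entirely, reducing everything to the strict gap already established on $G$. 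The result is the same argument in substance, but yours closes a small rigor gap that the paper leaves open.
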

\begin{proof} 
Define a $\gamma$-parameterized version of the effective objective in \eqref{ee} as
\begin{equation}
\epsilon({\beta}; \gamma) = -\field{R}[\lambda_1(\hbox{diag}(\beta)(W+\gamma\mathbf{11'})-\delta I)] \label{eeg}
\end{equation}
where $W$ is the adjacency matrix of $G$ so we have explicitly written the effective objective of a infection rate profile $\beta$ on the network $G_\gamma'$ as defined in Lemma 1.  For any fixed $\beta$ the function $\epsilon_\beta(\gamma) = \epsilon(\beta; \gamma)$ is a scalar mapping $\epsilon_\beta: \re_+ \rightarrow \re$. The eigenvalues of a matrix are continuous functions in the matrix elements because they are the roots of the characteristic equation, \cite{Z65}. Thus the mapping $\epsilon_\beta(\gamma)$ is continuous in $\gamma$ because it is a composition of continuous functions.

Define a $\gamma$-parameterized efficiency function using the $\gamma$-parameterized effective objective
\begin{equation}
Q(\beta; \gamma) = \frac{\epsilon_\beta(\gamma)- \epsilon_{\bar\beta}(\gamma)}{\epsilon(\beta^*;\gamma)-\epsilon_{\bar\beta}(\gamma)} \in [0,1].
\end{equation}
Observe that the optimal resource allocation $\beta^*$ depends on $\gamma$ and satisfies $\epsilon(\beta^*) > \epsilon(\bar\beta)$ for all networks guaranteeing that $\epsilon(\beta^*; \gamma) > \epsilon(\bar\beta; \gamma)=\epsilon_{\bar\beta}(\gamma)$ for all $\gamma\in \re_+$.  Having established that the denominator in the quotient is strictly positive, conclude that the function $Q(\beta;\gamma)$ is continuous in $\gamma$ because it can be constructed as sums and products of continuous functions.

From Theorem \ref{counterex}, the efficiency of $\hat\beta$ selected according degree or Page rank yields zero efficiency on network $G$ which can be rewritten in terms of the $\gamma$-parametrized efficiency as
\begin{equation}
Q(\hat\beta_{DEG}; \gamma=0)=Q(\hat\beta_{PR}; \gamma=0) = 0.
\end{equation}
Leveraging the continuity of $Q(\beta; \gamma)$, there exists a $\gamma$ such that $Q(\hat\beta_{DEG}; \gamma), Q(\hat\beta_{PR}; \gamma)\le \Gamma$ for any $\Gamma>0$.
\end{proof}

Theorem \ref{digraph} shows that having a strongly connected digraph does not remove the possibility that greedy centrality based networks will perform very poorly.  The digraph family constructed in the proof is only one method to produce a worst-case digraph.  Other worst case digraphs may be constructed but few are so easily analyzed.

\begin{figure}
\includegraphics[width=\columnwidth]{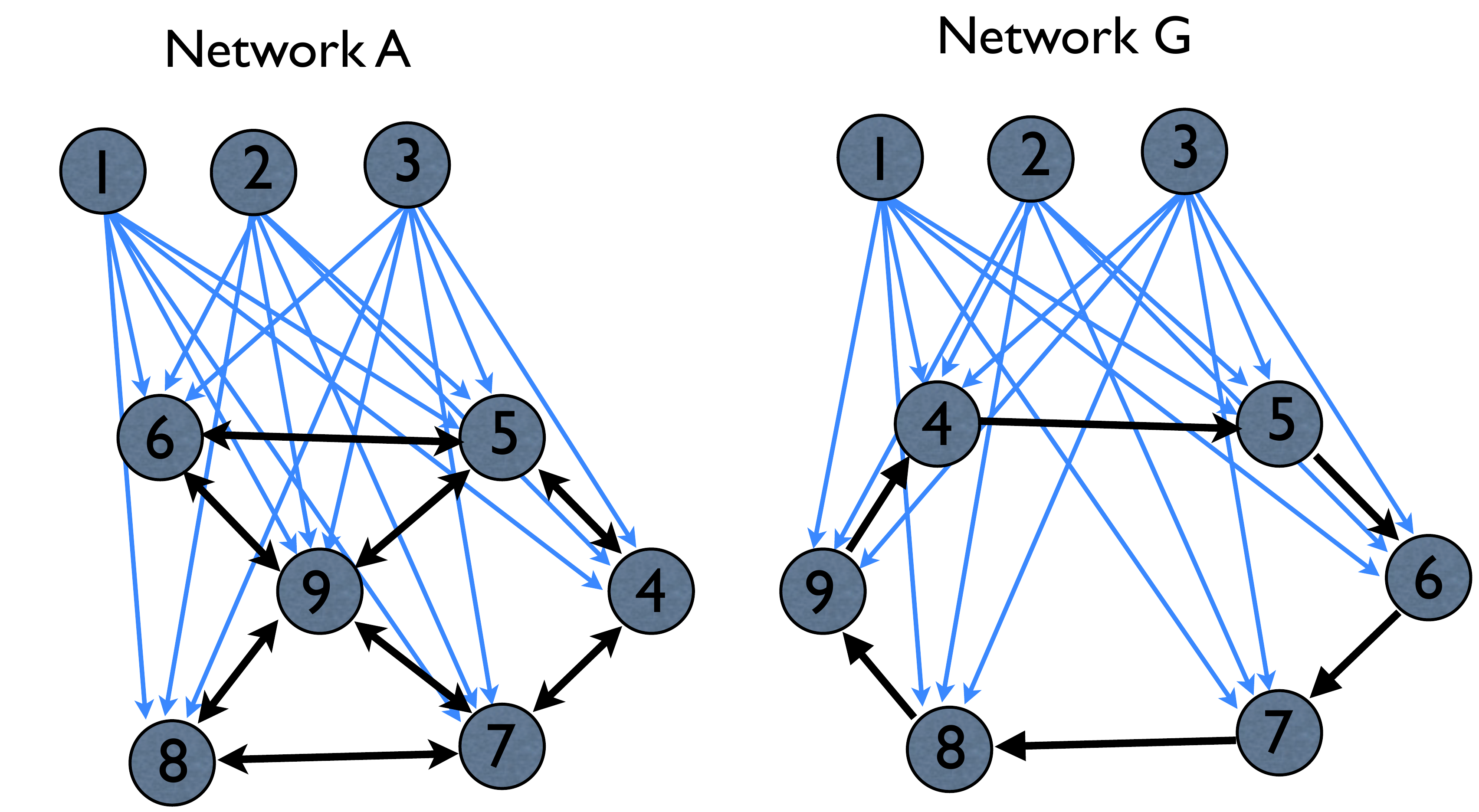}
\centering
\caption{\label{nets} Network G with vertices $S_3=\{1,2,3\}$ and $C_6=\{4,5,\dots,9\}$ satisfies the conditions for the counter example network defined in Theorem \ref{counterex}.  In Network A the subgraph on $C_6$ is relaxed to be less structured for demonstration purposes.}
\end{figure}
\begin{figure}
\includegraphics[width=\columnwidth]{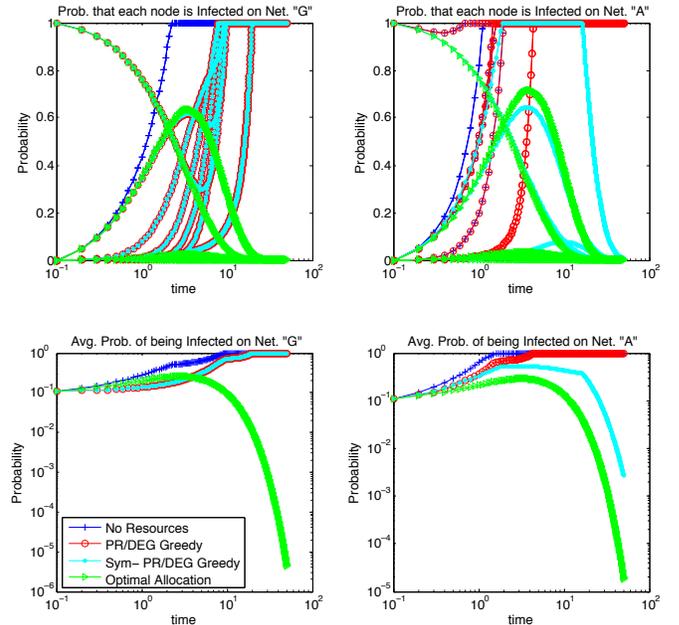}
\centering
\caption{\label{plots} The HeNiSIS dynamics are considered on the networks A and G when one node initially has an infection with probability 1. (Top, Left) In the counter example network G, all greedy algorithms fail to prevent the outbreak while the optimal allocation protects the network. (Bottom, left) The rate at which the virus is expunged by the optimal solution is exponential. (Top, Right) In network A, the symmetric centralities measures and the optimal allocation eventually eliminate the virus. (Bottom, Right) The rate at which the optimal allocation eliminates the virus is faster.}
\end{figure}
\section{Computational Results}\label{demos}
We consider a simple application in which such a worst case network might arise naturally. Nodes are computers belonging to individuals in a work environment. Edges indicate access to files on another persons computer. $C_m$ consists of a group of workers and $S_n$ a group of administrators who can access files on all works computers. Workers have limited access to each others computers but do not have access to files on the administrator's computers. We assume the virus may spread when an uninfected computer accesses an infected computer. Protection resources take the form of antivirus software with updates on a variable time interval,  software updated more frequently providing a smaller infection rate $\beta$ but updates incurring a greater cost $f(\beta)$. The cost function 
\begin{figure*}
\includegraphics[width=\textwidth]{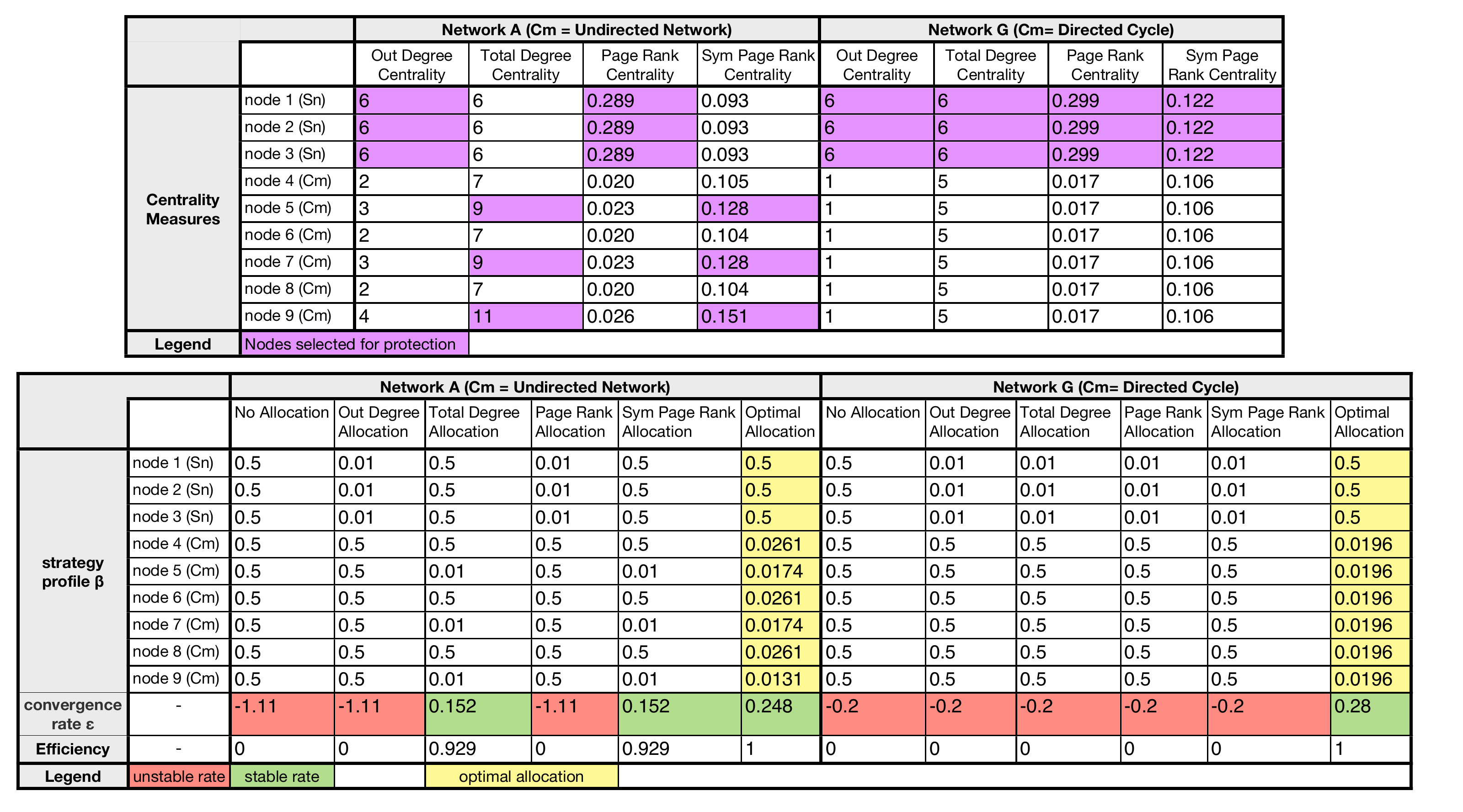}
\centering
\caption{\label{data} (Top) A variety of centrality measures are used as the basis for greedy algorithms, these measures are reported for the Networks A and G. (Bottom) The allocation strategies tested are detailed, their exponential convergence rate bounds $\epsilon$ and their efficiencies are reported for comparison purposes.  For the case of the counter example network G, none of the greedy type algorithms yield a stable convergence rate.
 }
\end{figure*}
\begin{equation}
f(\beta_i) = \frac{\underline\beta(\frac{\bar \beta}{\beta_i}-1)}{\bar\beta-\underline \beta }
\end{equation}
 is chosen to satisfy $f(\bar\beta)=0$, $f(\underline\beta)=1$ and $f(\beta) \propto 1/\beta$.  This allows us to choose capacity $C$ equal to the number of nodes we wish to be able to allocate maximum protection. In our example the infection rate with outdated anti-virus software is $\bar\beta = .5$ while the maximum update rate achieves an infection rate of $\underline\beta = .01$.  Choosing a budget of $C=3$ for a network with $n=3$ and $m=6$ (such as in G or A shown in Fig. \ref{nets}), the fraction of nodes that can be maximally protected is $r=1/3$.  An infected machine has recovery rate $\delta= 0.3$, based on curative resources which are uniformly available.
 
In the example, four heuristic algorithms based on greedily allocating resources with respect to centrality measures are considered.  The centrality measures are out degree, total degree, Page rank with $\alpha=.1$ and symmetrized Page rank with $\alpha=.1$. Symmetrized Page rank is computed by allowing the random walk move over a directed edge in either direction. The worst case networks are products of extreme asymmetry between $C_m$ and $S_n$, the symmetric centrality measure show that even symmetric centrality measure don't overcome the potential for arbitrarily poor behavior. 

In Fig. \ref{data} the top table shows all of the centrality vectors for the example problem in the networks $A$ and $G$. The network $G$ is the network constructed in our analytical proofs. The network $A$ is an example of a less structured employee collaboration network which we include to demonstrate two points: (i) our constructed network G is not unique and (ii) symmetrizing heuristics are less fragile than heuristics that respect edge direction. 

In $G$ and $A$ the out degree and Page rank heuristics allocate all resources to the admins, $S_n$. This is ineffective because even though the admins are the most likely to become infected the worker group, $C_m$ cannot access their files and become infected. Fig. \ref{data} (bottom) shows the infection rate profiles generated by the various heuristics and the optimal solution. A strategy is ineffective if the convergence rate epsilon is negative because this corresponds to unstable dynamics and the probability of infection becoming one for all machines. Figure \ref{plots} demonstrates the dynamic under each of the strategy profiles stated in Figure \ref{data}.

%
%

\section{conclusion}
We have proven that for common centrality measures there exist networks for which greedy allocation of protection resources is completely ineffective.  Furthermore, these worst case networks are not completely unreasonable pathological cases. An application in which this network structure could arise naturally is presented.

In practice, if the information and computational power to solve the optimization via GP are available, this method should always be used. Restrictions on computational power or complete information may still lead to use of heuristics.  Degree can be computed locally and Page rank can be approximated iteratively. When using these heuristics, we suggest using symmetric variants; while we showed that even the symmetric heuristics can be arbitrarily inefficient, we found the symmetric variants to be more robust.
%
%
%
\bibliographystyle{unsrt}
\bibliography{ViralSpread}
\end{document}